\DeclareMathOperator*{\argmax}{arg\,max}
\def\cC{\mathscr{C}}
\def\cP{\mathcal{P}}
\def\R{\mathbb{R}}
\def\wV{\mathrm{wV}}
\newtheorem{thm}{Theorem}[section]
\newtheorem*{thm*}{Theorem}
\newtheorem{cor}[thm]{Corollary}
\newtheorem{lm}[thm]{Lemma}
\newtheorem{pp}[thm]{Proposition}
\newtheorem{ft}[thm]{Fact}
\theoremstyle{definition}
\newtheorem{df}[thm]{Definition}
\newtheorem{eg}[thm]{Example}
\theoremstyle{remark}
\newtheorem{rk}[thm]{Remark}
\newtheorem{st}[thm]{Statement}
\providecommand{\keywords}[1]
{
	\small	
	\textbf{\textit{Keywords---}} #1
}
\title{Nash equilibria of quasisupermodular games}
\author{Lu Yu\thanks{Université Paris 1 Panthéon-Sorbonne, UMR 8074, Centre d'Economie de la Sorbonne, Paris, France, \href{mailto:yulumaths@gmail.com}{yulumaths@gmail.com}} \,\orcidlink{0000-0001-6154-4229}}
\date{\today}
\begin{document}\maketitle\begin{abstract}
		We prove three results on the existence and structure of Nash equilibria for quasisupermodular games. A theorem is purely order-theoretic, and the other two involve topological hypotheses.  Our topological results generalize Zhou's theorem (for supermodular games) and Calciano's theorem.
	\end{abstract}\keywords{Quasisupermodular games, Nash equilibrium, Tarski fixed point theorem}

\section{Introduction}
Quasisupermodular games, introduced by   Shannon and Milgrom in \cite{shannon1990ordinal,milgrom1994monotone},  are generalizations of supermodular games. Supermodular games exhibit ``strategic complementarities", wherein the choices of one player positively influence the preferences of others, leading them to favor similar higher actions in response.

The defining properties of supermodular games, namely supermodularity and  increasing difference property for \emph{real-valued} payoff functions, rely essentially on the additive group structure of the \emph{codomain} $\R$. Milgrom and Shannon \cite[p.160, p.162]{milgrom1994monotone} generalize the two conditions  to quasisupermodularity (Definition \ref{df:qsi}) and single crossing property (Definition \ref{df:single}). Their conditions require no longer  any additive structure on the codomain, so extend to chain-valued functions (\cite[p.59]{topkis1998supermodularity}). 

Assuming order upper semicontinuity on the payoff functions, Milgrom and Roberts  \cite[p.1266]{milgrom1990rationalizability} show that a supermodular game has a least and a largest Nash equilibrium.   Under (topological) continuity conditions   on the payoff functions, Milgrom and Shannon  \cite[Theorem 12]{milgrom1994monotone} give a similar result for quasisupermodular games.   The same conclusion holds even if the continuity restriction is relaxed and the payoff functions are allowed to be more general poset-valued functions (\cite[Proposition 8.49]{carl2010fixed}). Here and below, a set equipped with a partial order is called a \emph{poset}.

The study on the structure of Nash equilibria of supermodular games is initialized by Vives \cite[Theorem 4.2]{vives1990nash}. Veinott \cite[Ch.~10]{veinott1992lattice} establishes the existence and studies the structure of Nash equilibria for parameterized games with certain complementarities, which are variants of quasisupermodular games.  Zhou \cite[Theorem 2]{zhou1994set} shows that for a supermodular game, the Nash equilibrium set with the inherited order structure forms a complete lattice. Zhou's result is extended to quasisupermodular games by Calciano \cite[Theorem 3]{calciano2007games}.

The main objective of this paper is twofold: the first  concerns  (pure) Nash equilibria of quasisupermodular (normal form) games. We provide a  \emph{purely order-theoretic} result (Theorem \ref{thm:order}) concerning the structure of the set of Nash equilibria. As opposed to Nash's theorem on the existence of Nash equilibria, it does not require any topological hypothesis. The proof of Theorem \ref{thm:order} needs an existence result of maximum (Theorem \ref{thm:Kuku}).  Second, in the spirit of \cite[Theorem 1]{prokopovych2017strategic}, we relax the conventional topological condition (upper semicontinuity) of the payoff functions, to \emph{transfer upper continuity} (Definition \ref{df:transferctu}) in Theorems  \ref{thm:topo+order} and \ref{thm:topo}.

The text is organized as follows. Section \ref{sec:order} summarizes two completeness properties for lattices and shows their equivalence. Section \ref{sec:ctu} recalls several conditions analogous to continuity, both in the topological and the order-theoretic sense. Section \ref{sec:qs} reviews quasisupermodularity.  Section \ref{sec:max} gives existence results about maximum for functions. Finally in Section \ref{sec:qsgame}, results from the previous sections are used to establish the existence of Nash equilibria for quasisupermodular games. The order structure of the set of all Nash equilibria is also discussed.

\section{A result of Veinott}\label{sec:order}
In Section \ref{sec:order}, we give a new self-contained proof of a theorem due to Veinott \cite[Ch.~2]{veinott1992lattice} (recalled as Lemma \ref{lm:Veinott}). It is used in the proof of Theorem \ref{thm:order}, and we cannot find a published proof in the literature. Veinott's unpublished proof relies on transfinite induction.

A poset $L$ is called a \emph{lattice} if for any $x,y\in L$, both $\sup_L\{x,y\}$ and $\inf_L\{x,y\}$ exist in $L$, classically denoted by $x\vee y$ and $x\wedge y$ respectively. A poset $L$ is called a \emph{complete lattice} if for every nonempty subset $A$ of $L$, both $\sup_L(A)$ and $\inf_L(A)$ exist. (Such an $L$ is necessarily a lattice.) A poset is called a \emph{chain} (resp. an \emph{anti-chain}) if any two distinct elements are comparable (resp. incomparable). In an anti-chain with more than one element, every element is minimal but not the least element. For a lattice,  Lemma \ref{lm:min} shows the two notions agree.
\begin{lm}\label{lm:min}A minimal (resp.~maximal) element  of a lattice is the least (resp.~largest) element. \end{lm}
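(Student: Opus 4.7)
The plan is to handle the minimal case and invoke duality for the maximal case. Let $L$ be a lattice and $m \in L$ a minimal element; I want to show $m \le y$ for every $y \in L$. The natural move is to form the meet $m \wedge y$, which exists in $L$ because $L$ is a lattice.

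Next I would observe that $m \wedge y \le m$ by definition of the infimum. Minimality of $m$ says that no element of $L$ is strictly below $m$, so the inequality $m \wedge y \le m$ must be an equality: $m \wedge y = m$. But $m \wedge y \le y$ as well, so combining gives $m \le y$. Since $y$ was arbitrary, $m$ is the least element of $L$.

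The maximal case is symmetric: replacing $m \wedge y$ with $m \vee y$ and reversing inequalities, maximality forces $m \vee y = m$, hence $y \le m$ for all $y$. Alternatively, one may quote the result in the opposite lattice $L^{\mathrm{op}}$, which is again a lattice.

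I do not anticipate any genuine obstacle; the only subtlety worth flagging in the write-up is that the existence of $m \wedge y$ (respectively $m \vee y$) is precisely where the lattice hypothesis enters — in a general poset, a minimal element need not be least, as illustrated by the anti-chain example already mentioned in the surrounding text.
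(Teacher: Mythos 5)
Your proof is correct and follows the same argument as the paper: form $m \wedge y$, use minimality to conclude $m \wedge y = m$, and deduce $m \le y$, with the maximal case handled by duality.
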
\begin{proof}Let $s\in L$ be a minimal element of a lattice $L$. For every $s'\in L$, the element $s\wedge s'$ exists in $L$. Since $s$ is minimal and $s\wedge s'\le s$, one has $s=s\wedge s'\le s'$, which shows that $s$ is the least element. The statement in the parentheses is similar. \end{proof}
\begin{df}\label{df:subcompleteT2}
	Let $P$ be a poset.  Let $S$ be a subset of $P$. If for any $x,y\in S$, one has $\inf_P\{x,y\},\sup_P\{x,y\}\in S$, then $S$ is called a \emph{sublattice} of $P$.  If for every nonempty chain (resp. subset) $C\subset S$, both $\sup_P(C)$ and $\inf_P(C)$ exist and lie in $S$, then $S$ is called \emph{chain-subcomplete} (resp.~\emph{subcomplete})  in $P$. If $P$ is chain-subcomplete in itself, then  $P$ is called  \emph{chain-complete}.
\end{df}
The complete sublattice $[0,1)\cup\{2\}$ of the complete chain $[0,2]$ is not subcomplete. This example  can be found in \cite[p.1260]{milgrom1990rationalizability}.

In \cite[Theorem B]{kukushkin2012existence}, Lemma \ref{lm:Veinott} below is stated without proof  and attributed to Veinott's unpublished work \cite{veinott1992lattice}.  We provide a new proof for the reader's convenience. By \cite[p.115]{johnson1998elements}, for every set $X$, there is an object $|X|$, called the \emph{cardinal number} or \emph{cardinality} of $X$, with the property that for any two sets $X$ and $Y$, there is a bijection $X\to Y$ exactly when $|X|=|Y|$.
\begin{lm}[Veinott]\label{lm:Veinott}
	Let $S$ be a sublattice of a poset $P$. Then $S$ is subcomplete if and only if it is chain-subcomplete in $P$.
\end{lm}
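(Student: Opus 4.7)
My plan is to dispose of the easy direction first: subcompleteness trivially implies chain-subcompleteness, because every nonempty chain in $S$ is a nonempty subset. For the converse, I will fix a nonempty subset $A\subset S$ and show that $\sup_P(A)$ exists and lies in $S$; the argument for $\inf_P(A)$ is formally dual.

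The central device is to replace Veinott's transfinite recursion by a single application of Zorn's lemma to a suitably enlarged set. Define $\tilde{A}$ to be the intersection of all subsets $T\subset S$ satisfying (i) $A\subset T$, (ii) $T$ is closed under binary joins taken in $P$, and (iii) $\sup_P(C)\in T$ for every chain $C\subset T$. Note that $S$ itself is such a $T$: it contains $A$, it is closed under binary joins in $P$ because it is a sublattice, and property (iii) is precisely chain-subcompleteness. Since properties (i)--(iii) are preserved under arbitrary intersection, $\tilde{A}$ itself lies in the family.

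Consequently $\tilde{A}$ is a nonempty poset in which every chain has a supremum inside $\tilde{A}$, so Zorn's lemma yields a maximal element $m\in\tilde{A}$. For any $x\in\tilde{A}$ the join $m\vee x$ again lies in $\tilde{A}$ by property (ii) and dominates $m$, so maximality forces $m\vee x=m$, i.e.\ $x\leq m$. This is the one-sided analogue of the trick already used in Lemma \ref{lm:min}. Hence $m=\max\tilde{A}$, and in particular $m$ is an upper bound of $A$ in $P$.

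To upgrade $m$ to the \emph{least} upper bound, I will use the following observation: for any upper bound $b\in P$ of $A$, the set $T_b:=\{x\in S : x\leq b\}$ contains $A$, is closed under binary joins in $P$ (since $b$ dominates both arguments), and is closed under chain-suprema in $P$ (since $b$ dominates every element of the chain). The defining property of $\tilde{A}$ then gives $\tilde{A}\subset T_b$, so in particular $m\leq b$. Therefore $m=\sup_P(A)\in S$, as required. The main obstacle---avoiding transfinite induction while producing an upper bound \emph{inside} $S$---is resolved by this closure-plus-Zorn construction, and the $T_b$ step is what certifies that the maximum of the enlarged set is actually the supremum in $P$, not merely some upper bound.
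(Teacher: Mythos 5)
Your proof is correct, and it takes a genuinely different route from the paper's. The paper argues by contradiction: it picks a counterexample $A$ of minimal cardinality, invokes an external decomposition theorem to write the (necessarily infinite) set $A$ as a chain-indexed increasing union of strictly smaller subsets, takes the suprema of the pieces (which lie in $S$ by minimality), and applies chain-subcompleteness to that chain of partial suprema. You instead argue directly: you close $A$ inside $S$ under binary joins and chain-suprema, apply Zorn's lemma to extract a maximal element of the closure $\tilde{A}$, upgrade it to $\max\tilde{A}$ via the join trick, and then certify it is the \emph{least} upper bound in $P$ by showing that $T_b=\{x\in S:x\le b\}$ belongs to the defining family for every upper bound $b\in P$ of $A$, so that $\tilde{A}\subset T_b$. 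All the steps check out: the family being intersected is nonempty (it contains $S$), conditions (i)--(iii) pass to arbitrary intersections, condition (iii) is well-posed because every nonempty chain in $T\subset S$ has a supremum in $P$ by chain-subcompleteness, and condition (ii) is exactly what turns Zorn's maximal element into a maximum (as in Lemma \ref{lm:min}). Your approach buys self-containedness --- no cardinal arithmetic and no appeal to the decomposition theorem the paper cites --- at the cost of a slightly less explicit use of the chain-subcompleteness hypothesis; the paper's proof displays the relevant chain $\{b_{\alpha}\}$ concretely and stays closer in spirit to Veinott's original transfinite induction. Both arguments use the axiom of choice. One cosmetic remark: you should say ``nonempty chain'' in condition (iii), consistent with Definition \ref{df:subcompleteT2}, so that $\sup_P(C)$ is guaranteed to exist.
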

\begin{proof}The ``only if" part follows from the definition. We prove the ``if" part.
	
	Suppose to the contrary that $S$ is not subcomplete. Then there is a nonempty subset $A\subset S$ such that $\sup_P(A)$ or $\inf_P(A)$ is not in $S$. By symmetry, we may assume that $\sup_P(A)$ is not in $S$. Let $\cP$ be the set of all nonempty subsets $A'$ of $S$ such that $\sup_P(A')$ does not exist in $S$. Then $A\in\cP$.
	From \cite[the first sentence]{honig1954proof}), we may assume  $|A|\le |A'|$ for all $A'\in\cP$.  Since $S$ is a lattice, the set $A$ is infinite. Then by \cite[Thm.~2.17]{roman2008lattices}, there is a chain $I$ and a nonempty subset $A_{\alpha}\subset A$ for every $\alpha\in I$, such that: 
	\begin{itemize}
		\item\label{it:Aalpha} For all $\alpha\in I$, one has $|A_{\alpha}|<|A|$;
		\item\label{it:Aunion} One has $A=\cup_{\alpha\in I}A_{\alpha}$;
		\item\label{it:incrsubset} For any $\alpha<\beta$ in $I$, one has $A_{\alpha}\subset A_{\beta}$.
	\end{itemize} 
	
	By \ref{it:Aalpha}, one has $A_{\alpha}\notin \cP$ for  all  $\alpha\in I$.
	Therefore, $b_{\alpha}:=\sup_P(A_{\alpha})$ exists  in $S$. By \ref{it:incrsubset}, for any  $\alpha\le \beta$ in $I$, one has  $b_{\alpha}\le b_{\beta}$. Thus,  $\{b_{\alpha}\}_{\alpha\in I}$ is a nonempty increasing chain in $S$. Since $S\subset P$ is chain-subcomplete, the element $b=\sup_P\{b_{\alpha}\}_{\alpha\in I}$ exists and is in $S$.
	
	We claim that $b=\sup_P(A)$. By \ref{it:Aunion}, for every $a\in A$, there is  $\alpha_0\in I$ with $a\in A_{\alpha_0}$. Then $a\le b_{\alpha_0}\le b$. If $b'\in P$ is another upper bound on $A$,  then $b'\ge b_{\alpha}$ for every $\alpha\in I$. Therefore, $b'\ge b$.  The claim is proved. However, the claim contradicts the choice of $A$. 
\end{proof}
\begin{rk}Lemma \ref{lm:Veinott} implies \cite[Thm.~2.41~(iii)]{davey2002introduction}: a lattice having no infinite chains is complete.\end{rk}
\section{Continuity}\label{sec:ctu}
In Section \ref{sec:ctu}, we review some variants   of the continuity in topology. They are different from the conventional upper semicontinuity, but suffice to ensure the existence of a maximum. They are part of the hypotheses of our main results in  Section \ref{sec:qsgame}.

Transfer continuity conditions are introduced by Tian and Zhou \cite[Definitions 1, 2]{tian1995transfer}. They are used to establish the existence of Nash equilibria in discontinuous games in \cite{nessah2016existence}.
\begin{df}\label{df:transferctu}
	Let $X$ be a topological space. A function $f:X\to \R$ is called \emph{transfer} (resp. \emph{weakly transfer}) \emph{upper continuous} if for any $x,y\in X$ with $f(y)<f(x)$, there exists a point $x'\in X$ and an open neighborhood  $U$ of $y\in X$, such that $f(z)<f(x')$ (resp. $f(z)\le f(x')$) for every $z\in U$.
\end{df}
An upper semicontinuous function is transfer upper continuous, 
and a transfer upper continuous is transfer weakly upper continuous. As \cite[Examples 1,2]{tian1995transfer} show, the converses fail.   

Let $\cC$ (resp.~$(L,\le)$) be a chain (resp.~complete lattice). The interval topology (in the sense of \cite[p.570]{frink1942topology}) of the lattice $L$ is defined  to be the smallest topology such that $\{x\in L:x\ge a\}$ and $\{x\in L:x\le a\}$ are closed for every $a\in L$.   It is a topology canonically determined by the order structure. 

We recall several sorts of continuity involving order.  Let $f:L\to \cC$ be a function. It is called \emph{topologically upper semicontinuous}, if for every $a\in \cC$, the set $[f\ge a]:=\{x\in L:f(x)\ge a\}$ is closed in the interval topology of $L$. (When $\cC=\R$, this reduces to the upper semicontinuity of real-valued functions.)

Definition \ref{df:orderctu} is  an order-theoretic counterpart of the topological semicontinuity.  We need the notation from  \cite[p.1261]{milgrom1990rationalizability}. For a nonempty chain $C\subset L$, the pair $(C,\le)$ is naturally a directed set. Then the inclusion $C\to L$ is a net in $L$, which is denoted by $x\in C,x\uparrow\sup_L(C)$. Define the reversed order $\preceq$ on $C$, by $x\preceq y$ if and only if $x\ge y$ for all $x,y\in C$. Then $(C,\preceq)$ is another directed set. The inclusion $C\to L$ gives another net in $L$, which is denote by $x\in C,x\downarrow\inf_L(C)$.  For every net $x_{\bullet}:(I,\le)\to L$, define the limit superior along the net by $\limsup_{i\in I}x_i:=\inf_{i\in I}\sup_{j\ge i}x_j$. It is well-defined by completeness of $L$.
\begin{df}[{\cite[p.1260]{milgrom1990rationalizability}, \cite[Definition 1]{prokopovych2017strategic}}]\label{df:orderctu}Let $f:L\to \cC$ be a  function  from a complete lattice to a complete chain. It is \emph{upward} (resp. \emph{downward}) \emph{upper semicontinuous}, if for every nonempty chain $C\subset L$,  one has \[\limsup_{x\in C,x\uparrow\sup_L(C)}f(x)\le f(\sup_LC)\]
	\[\text{(resp. }\limsup_{x\in C,x\downarrow\inf_L(C)}f(x)\le f(\inf_LC)).\] If $f$ is both upward and downward upper semicontinuous, then it is called \emph{order upper semicontinuous}.
\end{df}
\begin{rk}\label{rk:everychainisfinite}For a finite chain $C$, the corresponding inequalities in Definition \ref{df:orderctu} are automatic. In particular, if every chain in $L$ is finite, then every function on $L$ is order upper semicontinuous. The sum of two upward (resp. downward) upper semicontinuous functions on $L$ is also upward (resp. downward) upper semicontinuous.\end{rk}

Definition \ref{df:Shannon} is an extension of Definition \ref{df:orderctu} to non-complete chains. This notion is used in Theorem \ref{thm:order}.
\begin{df}[{\cite[p.8]{shannon1990ordinal}}]\label{df:Shannon} Let $f:L\to \cC$ be a  function  from a complete lattice to a  chain. It is \emph{upper chain subcomplete}, if for every $a\in \cC$ and every nonempty chain $C$ in $[f\ge a]$, both $\sup_L(C)$ and $\inf_L(C)$ are in $[f\ge a]$.
\end{df}
\begin{pp}\label{pp:usc}
	Let $f:L\to \cC$ be a function from a complete lattice to a complete chain. Then $f$ is order upper semicontinuous if and only if $f$ is upper chain subcomplete.
\end{pp}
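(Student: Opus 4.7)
The plan is to prove the two implications separately. Fix a nonempty chain $C \subset L$ throughout, and write $\alpha := \limsup_{x \in C, x \uparrow \sup_L(C)} f(x) = \inf_{i \in C} \sup_{j \in C, j \ge i} f(j)$.

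For the forward direction, suppose $f$ is order upper semicontinuous. Given $a \in \cC$ and a nonempty chain $C \subset [f \ge a]$, for every $i \in C$ one has $\sup_{j \in C, j \ge i} f(j) \ge f(i) \ge a$; taking infimum over $i$ yields $\alpha \ge a$, and upward upper semicontinuity gives $f(\sup_L(C)) \ge \alpha \ge a$. Hence $\sup_L(C) \in [f \ge a]$, and the analogous claim for $\inf_L(C)$ follows by the symmetric argument using downward upper semicontinuity.

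For the reverse direction, assume upper chain subcompleteness. I would first show that for every $a \in \cC$ with $a < \alpha$, the set $C_a := \{j \in C : f(j) \ge a\}$ is cofinal in $C$: indeed, for each $i \in C$, the inequality $\sup_{j \in C, j \ge i} f(j) \ge \alpha > a$ forces some $j \in C$ with $j \ge i$ and $f(j) > a$. Cofinality gives $\sup_L(C_a) = \sup_L(C)$, so upper chain subcompleteness applied to the chain $C_a \subset [f \ge a]$ yields $f(\sup_L(C)) \ge a$.

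The main obstacle is promoting the bound ``$f(\sup_L(C)) \ge a$ for every $a < \alpha$'' to ``$f(\sup_L(C)) \ge \alpha$'', since one cannot in general take a limit of such inequalities inside the chain $\cC$. If $\alpha$ has no immediate predecessor in $\cC$, then either $\alpha$ is the least element (in which case the conclusion is trivial) or $\alpha = \sup\{a \in \cC : a < \alpha\}$, and the previous bound suffices. The remaining case is that $\alpha$ admits an immediate predecessor $\alpha^-$; here I would establish directly that $C_\alpha$ itself is cofinal in $C$. Otherwise some $i \in C$ would satisfy $f(j) < \alpha$ for all $j \ge i$ in $C$, and because $\cC$ is a chain this forces $f(j) \le \alpha^-$ for such $j$, giving $\sup_{j \ge i} f(j) \le \alpha^- < \alpha$ and contradicting the definition of $\alpha$. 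Upper chain subcompleteness applied to $C_\alpha$ then yields $f(\sup_L(C)) \ge \alpha$, and the statement for $\inf_L(C)$ follows by the symmetric argument.
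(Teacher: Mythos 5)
Your proof is correct and follows essentially the same route as the paper's: both directions hinge on the observation that a cofinal subchain of $C$ contained in $[f\ge b]$ has the same supremum as $C$, combined with a case analysis to handle possible gaps in the chain $\cC$. You organize the converse as a direct argument (first $f(\sup_L C)\ge a$ for all $a<\alpha$, then a dichotomy on whether $\alpha$ has an immediate predecessor), whereas the paper argues by contradiction with a dichotomy on the existence of an intermediate element, but the substance is the same.
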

\begin{proof}
	We prove that $f$ is upward upper semicontinuous if and only if for every $a\in \cC$ and every nonempty chain $C$ in $[f\ge a]$, one has $\sup_LC\in[f\ge a]$. 
	
	Assume that $f$ is upward upper semicontinuous. Then for every $a\in \cC$ and every nonempty chain $C$ in $[f\ge a]$, one has \[f(\sup_LC)\ge \limsup_{x\in C,x\uparrow\sup_LC}f(x)\ge a,\] so $\sup_LC\in [f\ge a]$.
	
	Conversely, assume that for every $a\in \cC$ and every nonempty chain $C'$ in $[f\ge a]$, one has $\sup_L(C')\in [f\ge a]$. For every nonempty chain $C\subset S$, set $b_2:=\limsup_{x\in C,x\uparrow\sup_LC}f(x)$.  We need to prove  \begin{equation}\label{eq:fsupSC}f(\sup_LC)\ge b_2.\end{equation} 
	Assume, by contradiction, $f(\sup_LC)<b_2$.   Because $\cC$ is a chain, there are exactly two cases.
	\begin{enumerate}
		\item\label{it:b1middle} There is $b_1\in \cC$ with $f(\sup_LC)<b_1<b_2$.
		\item\label{it:nointermediate}For every $\beta\in \cC$ with $\beta>f(\sup_LC)$, one has $\beta\ge b_2$. In this case, from $f(\sup_LC)<b_2$, for every $x\in C$, one has $\sup_{y\in C,y\ge x}f(y)>f(\sup_LC)$.  There is $y_x\in C$ with $y_x\ge x$ and $f(y_x)>f(\sup_LC)$. Then $f(y_x)\ge b_2$.
	\end{enumerate}

	In Case \ref{it:b1middle} (resp. \ref{it:nointermediate}), set $b$ to be $b_1$ (resp. $b_2$). Then in both cases, there is  a directed set $I$ and an order-preserving function $h:I\to C$ (i.e., a  Willard-subnet) such that $h(I)$ is a cofinal subset of $C$ and for every $u\in I$, one has $f(h(u))\ge b$. Then $h(I)\subset [f\ge b]$. 
	
	Since $h(I)\subset C$, one has $\sup_LC\ge \sup_L(h(I))$. For every $x\in C$, as $h(I)$ is cofinal in $C$, there is $y\in I$ with $x\le h(y)$. Then $x\le \sup_L(h(I))$ for all $x\in C$. Therefore, $\sup_LC\le \sup_L(h(I))$. Hence    $\sup_LC= \sup_L(h(I))$.

	As $h(I)$ is a chain in $[f\ge b]$, one has $\sup_LC= \sup_L(h(I))\in [f\ge b]$, i.e., \[f(\sup_LC)\ge b>f(\sup_LC).\] This contradiction proves \eqref{eq:fsupSC}. Then $f$ is upward upper semicontinuous.
	
	Similarly, $f$ is downward upper semicontinuous if and only if  for every $a\in \cC$ and every chain $C$ in $[f\ge a]$, one has $\inf_LC\in [f\ge a]$. The two parts finish the proof.
\end{proof}
Proposition \ref{pp:topo} shows that the topological continuity is stronger than the order continuity. 
\begin{pp}\label{pp:topo}	Let $f:L\to \cC$ be a  topologically upper semicontinuous function from a complete lattice to a  chain. Then $f$ is upper chain subcomplete. If further $\cC$ is complete, then $f$ is order upper semicontinuous.
\end{pp}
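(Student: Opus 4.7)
The strategy is to prove the following key topological fact: for every nonempty chain $C\subset L$, both $\sup_L C$ and $\inf_L C$ lie in the closure $\overline{C}$ in the interval topology of $L$. Granted this, upper chain subcompleteness is immediate, since any chain $C\subset[f\ge a]$ (with $[f\ge a]$ closed by the topological upper semicontinuity of $f$) satisfies $\sup_L C,\inf_L C\in\overline{C}\subset[f\ge a]$. The second assertion then follows from Proposition \ref{pp:usc}, which identifies order upper semicontinuity with upper chain subcompleteness when $\cC$ is complete.

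To show $\sup_L C\in\overline{C}$, I would take an arbitrary basic open neighborhood $U$ of $\sup_L C$, which by definition of the interval topology has the form
\[U=\bigcap_{i=1}^m\{x\in L:x\not\ge a_i\}\cap\bigcap_{j=1}^n\{x\in L:x\not\le b_j\}\]
for finitely many $a_i,b_j\in L$. For each $j$, since $\sup_L C\not\le b_j$, the element $b_j$ is not an upper bound of $C$, so there exists $x_j\in C$ with $x_j\not\le b_j$. Because $C$ is a chain, the finite set $\{x_1,\ldots,x_n\}$ has a maximum $x^*\in C$, and $x^*\not\le b_j$ for each $j$ (else $x_j\le x^*\le b_j$, a contradiction). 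Moreover $x^*\in C$ forces $x^*\le\sup_L C$, so $x^*\ge a_i$ would give $a_i\le\sup_L C$, contradicting $\sup_L C\not\ge a_i$. Hence $x^*\in U\cap C$, and therefore $\sup_L C\in\overline{C}$. (The degenerate cases $m=0$ or $n=0$ are handled by the same argument with vacuous constraints, using any element of the nonempty chain $C$.) The argument for $\inf_L C$ is symmetric, interchanging the roles of the two kinds of subbasic open sets.

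The main obstacle I anticipate is in correctly handling the basic open sets of the interval topology, which interleave "$\not\ge$" and "$\not\le$" constraints, rather than being order-intervals in a naive sense. The resolution is to exploit the chain structure of $C$ to amalgamate the finitely many "$\not\le b_j$"-witnesses into a single element $x^*$, while the "$\not\ge a_i$"-conditions come for free from $x^*\le\sup_L C$.
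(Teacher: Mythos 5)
Your proof is correct, but it reaches the key topological fact by a genuinely different route than the paper. The paper derives chain-subcompleteness of closed sets from Corollary \ref{cor:chaincls}, whose proof represents an arbitrary closed set $A$ as an intersection $\cap_{\alpha}A_{\alpha}$ of finite unions of closed intervals and then uses Lemma \ref{lm:chsup} to locate $\sup_L(C)$ inside one interval of each $A_{\alpha}$. You instead prove the sharper and more local statement that $\sup_L C$ and $\inf_L C$ always lie in the topological closure $\overline{C}$, by checking directly that every basic open neighborhood of $\sup_L C$ (a finite intersection of subbasic sets $\{x\not\ge a_i\}$ and $\{x\not\le b_j\}$) meets $C$; the chain structure lets you amalgamate the witnesses $x_j\not\le b_j$ into a single maximal one $x^*$, and the constraints $x^*\not\ge a_i$ are inherited from $x^*\le\sup_L C$. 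Your argument is more elementary in that it avoids the representation of closed sets and the auxiliary Lemma \ref{lm:chsup}, and your lemma immediately implies Corollary \ref{cor:chaincls} (since $C\subset A$ closed gives $\overline{C}\subset A$); what the paper's route buys is that Corollary \ref{cor:chaincls} is packaged as a standalone result reused elsewhere in the text. Both proofs conclude identically, invoking Proposition \ref{pp:usc} to pass from upper chain subcompleteness to order upper semicontinuity when $\cC$ is complete.
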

\begin{proof}
	For every $a\in \cC$, by the topological upper semicontinuity of $f$, the subset $[f\ge a]\subset L$  is closed. By Corollary \ref{cor:chaincls}, for every chain $C$ in $[f\ge a]$, one has $\sup_L(C),\inf_L(C)\in [f\ge a]$. Therefore, $f$ is upper chain subcomplete. When $\cC$ is complete, by Proposition \ref{pp:usc}, $f$ is order upper semicontinuous.
\end{proof}
To prove Corollary \ref{cor:chaincls} (used in the proof of Proposition \ref{pp:topo}), we need Lemma \ref{lm:chsup}. 
\begin{lm}\label{lm:chsup}
	Let $W,V$ be two nonempty subsets of   $L$ such that $C:=W\cup V$ is a chain. Then $\sup_L(C)=\sup_L(W)$ or  $\sup_L(C)=\sup_L(V)$.
\end{lm}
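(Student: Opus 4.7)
The plan is to reduce the claim to showing that $\sup_L(W)$ and $\sup_L(V)$ are comparable in $L$. Both suprema exist because $W$ and $V$ are subchains of $C$ and $L$ is assumed complete (this is the setting in which Lemma \ref{lm:chsup} is applied in the proof of Proposition \ref{pp:topo}). Once comparability is known, the larger of the two is an upper bound of $C=W\cup V$, and it is clearly dominated by any upper bound of $C$, so it coincides with $\sup_L(C)$.

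To prove comparability of $s_W:=\sup_L(W)$ and $s_V:=\sup_L(V)$, I would do a dichotomy on whether some element of $W$ dominates $V$. In the first case, suppose there exists $w_0\in W$ such that $w_0\ge v$ for every $v\in V$. Then $w_0$ is an upper bound of $V$, so $s_V\le w_0\le s_W$; hence any upper bound of $W$ is an upper bound of $C$, giving $\sup_L(C)=s_W$.

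In the opposite case, for every $w\in W$ there is some $v\in V$ with $v\not\le w$. Here is the only place where the hypothesis that $C$ is a chain is used: since $v,w\in C$ are comparable, $v\not\le w$ forces $w\le v$, and then $w\le v\le s_V$. Thus $s_V$ is an upper bound of $W$, so $s_W\le s_V$; reversing the roles of $W$ and $V$ in the previous paragraph gives $\sup_L(C)=s_V$.

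The argument is essentially a case-split, so there is no real obstacle beyond making the dichotomy cleanly and using the chain hypothesis at precisely the right moment. The only subtle point is justifying the existence of the suprema; once that is noted, the conclusion $\sup_L(C)\in\{s_W,s_V\}$ follows immediately from the comparability of $s_W$ and $s_V$ together with the identity $\sup_L(W\cup V)=s_W\vee s_V$.
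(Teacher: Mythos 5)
Your proof is correct and follows essentially the same route as the paper's: both establish $\sup_L(C)=\sup_L(W)\vee\sup_L(V)$ and then resolve which of the two it equals by the same dichotomy (some element of $W$ dominating all of $V$, versus its negation, where the chain hypothesis forces the reverse comparison). Your explicit remark that the suprema exist because $L$ is the ambient complete lattice of Section 3 is a fair point the paper leaves implicit.
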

\begin{proof}
	Since $W\subset C$ (resp. $V\subset C$), one has $\sup_L(W)\le \sup_L(C)$ (resp. $\sup_L(V)\le \sup_L(C)$). Hence $\sup_L(W)\vee \sup_L(V)\le \sup_L(C)$. Since $\sup_L(W)\vee \sup_L(V)$ is an upper bound on $C$, one has $\sup_L(W)\vee \sup_L(V)=\sup_L(C)$.   If for every $u\in W$, there is $v\in V$ with $u\le v$, then $\sup_L(W)\vee \sup_L(V)=\sup_L(V)$. Otherwise, as $C$ is a chain, there is $u_0\in W$ such that $u_0\ge v$ for all $v\in V$. Hence $u_0\ge\sup_L(V)$ and $\sup_L(W)\vee \sup_L(V)=\sup_L(W)$.
\end{proof}
Roughly, Corollary \ref{cor:chaincls} below means that a subset closed in the interval topology is also closed under certain order theoretic operations.
\begin{cor}\label{cor:chaincls}
	Let $A\subset L$ be a subset closed in the interval topology of $L$. Then $A$ is chain-subcomplete in $L$.  
\end{cor}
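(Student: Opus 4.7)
The plan is to argue by contradiction, reducing the assertion to a case analysis on the subbasic closed sets of the interval topology and then applying Lemma \ref{lm:chsup} to locate $\sup_L(C)$ inside one of these subbasic sets. Since the definition of the interval topology is symmetric in $\le$ and $\ge$, one can obtain the $\inf_L(C) \in A$ conclusion by passing to the opposite lattice $L^{\mathrm{op}}$ (on which $A$ is still closed for the interval topology, and the dual of Lemma \ref{lm:chsup} holds), so it suffices to prove $\sup_L(C) \in A$ for every nonempty chain $C \subset A$.

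Assume, for contradiction, that $\sup_L(C) \notin A$. Since $A$ is closed, there is a basic open neighborhood $U$ of $\sup_L(C)$ disjoint from $A$; write $U = \bigcap_{i=1}^n (L \setminus S_i)$ with each $S_i$ a subbasic closed set of the form $\{x \in L : x \ge a_i\}$ or $\{x \in L : x \le a_i\}$. Then $\sup_L(C) \notin S_i$ for every $i$, while $C \subset A \subset \bigcup_{i=1}^n S_i$. Letting $I \subset \{1,\dots,n\}$ index those $i$ for which $C_i := C \cap S_i$ is nonempty, one gets $C = \bigcup_{i \in I} C_i$ as a finite union of nonempty subchains of $C$.

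Iterating Lemma \ref{lm:chsup} over $I$ produces an index $i_0 \in I$ with $\sup_L(C) = \sup_L(C_{i_0})$. A short case distinction on $S_{i_0}$ then finishes the argument: if $S_{i_0} = \{x \in L : x \ge a_{i_0}\}$, every $x \in C_{i_0}$ satisfies $x \ge a_{i_0}$, so $\sup_L(C_{i_0}) \ge a_{i_0}$; if $S_{i_0} = \{x \in L : x \le a_{i_0}\}$, then $a_{i_0}$ is an upper bound on $C_{i_0}$, so $\sup_L(C_{i_0}) \le a_{i_0}$. In either case $\sup_L(C) = \sup_L(C_{i_0}) \in S_{i_0}$, contradicting $\sup_L(C) \notin S_{i_0}$.

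The step I expect to require the most care is the iteration of Lemma \ref{lm:chsup}: the lemma is stated for two nonempty subsets, so extending it to a finite union calls for an induction in which empty pieces $C_i$ are discarded at each step to preserve the nonemptiness hypothesis. This is routine but should be spelled out to avoid a subtle gap; the rest of the argument is a clean contradiction via the subbasis of the interval topology.
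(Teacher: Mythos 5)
Your proof is correct and is essentially the paper's argument: both reduce to covering the chain $C$ by a finite union of subbasic closed sets (rays/closed intervals), iterate Lemma \ref{lm:chsup} to place $\sup_L(C)$ as the supremum of the trace of $C$ on one of them, and observe that such a ray contains the supremum of any of its nonempty subsets. The only cosmetic difference is that you phrase it contrapositively via a basic open neighborhood of $\sup_L(C)$ disjoint from $A$, whereas the paper writes $A$ directly as an intersection of finite unions of closed intervals and shows $\sup_L(C)$ lies in each.
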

\begin{proof}
	By the definition of  interval topology, one can write $A=\cap_{\alpha\in I}A_{\alpha}$, where  $A_{\alpha}$ is a finite union of closed intervals for each index $\alpha\in I$. More precisely, there is an integer $n(\alpha)\ge 1$ and closed intervals $I_{\alpha}^i$ ($i=1,\dots,n(\alpha)$) in $L$ with 
	$A_{\alpha}=\cup_{i=1}^{n(\alpha)}I_{\alpha}^i$. 	Let $C$ be a nonempty chain in $A$. For every $1\le i\le n(\alpha)$, one has $\sup_L(I_{\alpha}^i\cap C)\in I_{\alpha}^i$.
	For every $\alpha\in I$, one has $C\subset A_{\alpha}$, so $C=\cup_{i=1}^{n(\alpha)}(I_{\alpha}^i\cap C)$. By Lemma \ref{lm:chsup}, there is $1\le i_{\alpha}\le n(\alpha)$ with  $\sup_L(C)=\sup_L(I_{\alpha}^{i_{\alpha}}\cap C)$. For every $\alpha\in I$, one has $\sup_L(C)\in I_{\alpha}^{i_{\alpha}}\subset A_{\alpha}$. Hence $\sup_LC\in A$. Similarly, one has $\inf_LC\in A$. 
\end{proof}
\begin{rk}
	In Example \ref{eg:A3}, the subset $X\subset L$ is chain-subcomplete, but not closed in the interval topology. Whence, the converse of Corollary \ref{cor:chaincls} fails. Topkis \cite[Thm.~2.3.1]{topkis1998supermodularity} shows that for every integer $n\ge0$, every chain-subcomplete \emph{sublattice} of $\R^n$ is closed in the \emph{Euclidean} topology. Still, the \emph{subset} $\{(t,1-t)|0<t<1\}\subset \R^2$ is chain-subcomplete,  but not closed in the Euclidean topology. 
\end{rk}

\section{Quasisupermodular functions}\label{sec:qs}Let $L$ be a lattice. A function $g:L\to \R$  is called \emph{supermodular} if for any $x,y\in L$, one has $g(x)+g(y)\le g(x\wedge y)+g(x\vee y)$. The condition uses the additive group structure of the codomain $\R$. Thus, supermodularity  may not be preserved by a strictly increasing transformation (\cite[Example 2.6.5]{topkis1998supermodularity}). Quasisupermodularity is a  generalization of supermodularity that is preserved by such a transformation.  
\begin{df}\cite[p.162]{milgrom1994monotone}\label{df:qsi}
	A function $f:L\to C$ to a chain $C$ is called 
	\emph{quasisupermodular} if for any $x,y\in L$, the condition $f(x)\ge f(x\wedge y)$ implies $f(x\vee y)\ge f(y)$, and the condition $f(x)> f(x\wedge y)$ implies $f(x\vee y)> f(y)$.
\end{df}

Every supermodular function is quasisupermodular. Example \ref{eg:sumnotquasisupermodular} shows that unlike supermodular functions, the sum of two quasisupermodular function may not be quasisupermodular.
\begin{eg}\label{eg:sumnotquasisupermodular}
	Consider the sublattice $L=\{(1,1),(2,3),(3,2),(4,5)\}$ of $\R^2$. The function $f:L\to \R,\quad (x,y)\mapsto x$  is supermodular. The function $g:L\to \R,\quad (x,y)\mapsto -y$  is quasisupermodular. But the sum $h:=f+g$ is not quasisupermodular, since $h((2,3)\wedge (3,2))=h(1,1)<h(3,2)$ and $h(2,3)=h(4,5)=h((2,3)\vee (3,2))$.
\end{eg}

\begin{st}\cite[Theorem A3]{milgrom1994monotone}\label{st:A3}Let $L$ be a complete lattice. Let $f:L\to \R$ be a quasisupermodular and order upper semicontinuous function. Then $f$ is  upper semicontinuous in the interval topology of $L$.
\end{st}
However, Statement \ref{st:A3} is false even if $f$ is  supermodular.  Example \ref{eg:A3} is a counterexample. (Kukushkin \cite[Example 3.2]{kukushkin2009existence} provides a counterexample to \cite[Theorem A2]{milgrom1994monotone}.)
\begin{eg}\label{eg:A3}
	Let $X$ be an infinite anti-chain.	Let $L$ be the set obtained by adding two elements $m\neq M$ to $X$. We extend the partial order from $X$ to $L$ by requiring $m\le x\le M$ for every $x\in X$. Then $L$ is a complete lattice.  The interval topology of $L$ is $T_1$, compact, but not Hausdorff. 
	
	Fix $x_0\in X$. Define a function $f:L\to \R$  by setting $f(x)=1$ for every $x\neq x_0$ and $f(x_0)=0$.  By Remark \ref{rk:everychainisfinite}, since every chain in $L$ is finite,  $f$ is supermodular and order upper semicontinuous. Moreover, $\argmax f$ is compact but not closed in the interval topology of $L$. In particular, $f$ is not upper semicontinuous in the interval topology. (Thus, the converse of Proposition \ref{pp:topo} fails.)

	Take an element $x'(\neq x_0)$ of $L$. Then $f(x_0)<f(x')$ but $f$ does not satisfy the defining property of transfer upper continuity. This appears to contradict the literal statement of \cite[Theorem 2]{tian1995transfer}.  The reason is that the  topology in \cite[Theorem 2]{tian1995transfer} is implicitly assumed   to be Hausdorff.
\end{eg}

\section{Existence of  maximum}\label{sec:max}
In a normal form game, every player wants to maximize his/her own payoff function within the set of available  strategies. In this sense, the study of existence of a maximum of a given function is meaningful. The main objective of Section \ref{sec:max} is Theorem \ref{thm:Kuku}, which is used in Section \ref{sec:qsgame}.

Fact \ref{ft:Tian} is \cite[Theorem 2]{tian1995transfer}, except that the authors implicitly assume $X$  to be Hausdorff, as  Example \ref{eg:A3}  explains. Since the interval topology  of a complete lattice may not be Hausdorff, we state the following version.  The original proof still works.
\begin{ft}[Tian-Zhou]\label{ft:Tian}
	Let $X$ be a compact (but not necessarily Hausdorff) topological space. Let  $f:X\to \R$ be a transfer upper continuous function. Then $\argmax f$ is a nonempty closed subset of $X$, hence compact.
\end{ft}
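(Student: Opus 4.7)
The plan is to follow the original Tian--Zhou strategy and verify that Hausdorffness is never actually invoked. The argument splits into closedness and nonemptiness of $\argmax f$; compactness is then automatic, since a closed subspace of a compact space is compact without any separation axiom.

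For closedness, I would argue as follows. If $\argmax f=\emptyset$ the claim is vacuous, so suppose the maximum is attained at some $x^{\ast}\in X$, and set $M:=f(x^{\ast})=\max_{x\in X}f(x)$. For any $y\notin\argmax f$ one has $f(y)<f(x^{\ast})$, so transfer upper continuity applied to the pair $(x^{\ast},y)$ yields a point $x'\in X$ and an open neighborhood $U$ of $y$ such that $f(z)<f(x')\le M$ for every $z\in U$. Then $U\subset X\setminus\argmax f$, so $X\setminus\argmax f$ is open.

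The main content is nonemptiness, which is where I expect the interesting step. Assume for contradiction that $\argmax f=\emptyset$. Then for every $x\in X$ there exists $y_x\in X$ with $f(x)<f(y_x)$, and transfer upper continuity applied to the pair $(y_x,x)$ produces a point $x'_x\in X$ and an open neighborhood $U_x$ of $x$ such that $f(z)<f(x'_x)$ for every $z\in U_x$. The family $\{U_x\}_{x\in X}$ covers $X$, so compactness extracts a finite subcover $U_{x_1},\dots,U_{x_n}$. Pick an index $j\in\{1,\dots,n\}$ maximizing $f(x'_{x_i})$ and set $w:=x'_{x_j}$; since $w\in X$, it lies in some $U_{x_i}$, and then
\[
f(w)<f(x'_{x_i})\le f(x'_{x_j})=f(w),
\]
a contradiction.

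The only point at which one might worry about Hausdorffness is in a naive attempt to extract a limit of a maximizing net; the transfer hypothesis is designed precisely to replace such a limit argument by a finite-subcover argument, so no separation axiom enters. That is why the literal statement of \cite[Theorem~2]{tian1995transfer} remains valid once its implicit Hausdorff assumption (cf.~Example~\ref{eg:A3}) is dropped.
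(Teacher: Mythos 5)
Your proof is correct and is essentially the argument the paper relies on: the paper records this result as a Fact and simply notes that the original Tian--Zhou proof goes through unchanged, which is precisely the finite-subcover argument you reproduce (closedness from the strict inequality in Definition \ref{df:transferctu}, nonemptiness by contradiction via a finite subcover and a maximizing index). No separation axiom enters at any step, so nothing further is needed.
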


Lemma \ref{lm:Sha3} generalizes \cite[Propostion 3]{shannon1990ordinal}, which only asserts the completeness of $\argmax f$.
\begin{lm}\label{lm:Sha3}Let $f:L\to C$ be an  upper chain subcomplete function from a complete lattice to a chain. 
	If $\argmax f$ is a nonempty sublattice of $L$, then it is \emph{subcomplete} in $L$.  In particular, $\argmax f$ is  compact in the interval topology of $L$.
\end{lm}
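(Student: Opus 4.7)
The plan is to apply Lemma \ref{lm:Veinott} (Veinott), which for a sublattice reduces subcompleteness to chain-subcompleteness. Let $S := \argmax f$ and let $a \in C$ denote the common maximum value attained by $f$ on the nonempty set $S$, so that $S = [f \ge a]$.

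The first step is to verify that $S$ is chain-subcomplete in $L$. For any nonempty chain $K \subset S \subset [f \ge a]$, the upper chain subcompleteness of $f$ (Definition \ref{df:Shannon}) directly gives $\sup_L(K), \inf_L(K) \in [f \ge a] = S$. Since $S$ is by hypothesis a sublattice, Lemma \ref{lm:Veinott} upgrades chain-subcompleteness to full subcompleteness, yielding the main assertion.

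For the compactness claim, the key observation is that a subcomplete nonempty sublattice $S$ of $L$ is itself a complete lattice: suprema and infima of nonempty subsets of $S$ taken in $L$ lie in $S$, and nonemptiness of $S$ supplies the top and bottom. By Frink's theorem \cite{frink1942topology}, any complete lattice is compact in its own interval topology. The main obstacle, which I expect to be the only nontrivial point, is verifying that this intrinsic topology on $S$ coincides with the subspace topology induced from the interval topology of $L$. Subbasic closed sets of the interval topology of $S$ have the form $\{y \in S : y \ge c\} = S \cap \{x \in L : x \ge c\}$ for $c \in S$, and are therefore subbasic closed in the subspace topology. Conversely, for any $b \in L$, if $T := S \cap \{x \in L : x \ge b\}$ is nonempty, subcompleteness gives $\tilde b := \inf_L(T) \in S$, and a short verification (every element of $T$ is $\ge \tilde b$, and $\tilde b \ge b$ since $b$ is a lower bound for $T$) shows $T = \{y \in S : y \ge \tilde b\}$, a subbasic closed set of the interval topology of $S$. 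The dual argument handles sets of the form $S \cap \{x \in L : x \le b\}$. Hence the two topologies on $S$ coincide, and compactness in the interval topology of $S$ transfers to compactness in the interval topology of $L$.
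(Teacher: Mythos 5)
Your proof of the main assertion is the same as the paper's: you verify chain-subcompleteness of $\argmax f=[f\ge \max f]$ directly from Definition \ref{df:Shannon} and then invoke Lemma \ref{lm:Veinott} to upgrade to subcompleteness. The only divergence is in the ``in particular'' clause: the paper disposes of compactness by citing Topkis's theorem from \cite{yu2022existence} as a black box, whereas you prove it from scratch --- a nonempty subcomplete sublattice is a complete lattice, hence compact in its own interval topology by Frink's theorem, and the intrinsic interval topology of $\argmax f$ agrees with the subspace topology inherited from $L$ because subcompleteness lets you rewrite each trace $S\cap\{x\in L:x\ge b\}$ as $\{y\in S:y\ge \inf_L(S\cap\{x\ge b\})\}$ (and dually). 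That argument is correct; note that for compactness alone you only need the direction showing the subspace topology is coarser than the intrinsic one, though establishing the full coincidence costs nothing extra. Your version is self-contained where the paper's is not, at the price of a longer proof.
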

\begin{proof}
	For every nonempty chain $\cC\subset \argmax f$, since $f$ is upper chain subcomplete and $\argmax f=\{s\in L:f(s)\ge \max f\}$, one has $\sup_L\cC,\inf_L\cC\in \argmax f$. Therefore, $\argmax f$ is chain-subcomplete in $L$. By Lemma \ref{lm:Veinott}, the sublattice $\argmax f$ is  subcomplete in $L$. The compactness follows from \cite[Topkis's theorem]{yu2022existence}.
\end{proof}
\begin{rk}\label{rk:durieu}In Lemma \ref{lm:Sha3},  the compactness is not clear \textit{a priori}. If the condition ``upper chain subcomplete" of   is changed to  ``upper semicontinuity in the interval topology of $L$", then  $\argmax f$  is \emph{compact} in $L$. \emph{Given} the compactness, from completeness of $L$ and \cite[Footnote d, p.187]{durieu2008ordinal}, this sublattice is subcomplete. Still, Example \ref{eg:A3} and Proposition \ref{pp:usc} show that an upper chain subcomplete function may \emph{not} be upper semicontinuous in the interval topology.\end{rk}
Theorem \ref{thm:Kuku}  generalizes simultaneously \cite[Theorems 1,~2]{milgrom1990rationalizability} and \cite[Thm.~A4]{milgrom1994monotone}. (Although the proof of \cite[Thm.~A4]{milgrom1994monotone} uses the wrong Statement \ref{st:A3}, the statement itself is correct.) It relaxes the supermodularity condition to quasisupermodularity and strengthens the completeness result to subcompleteness. It gives a purely order-theoretic sufficient condition for the existence of maximum of functions on lattices. (A related  condition is given in \cite[Corollaries 3.1,~3.2]{kukushkin2012existence}.)
\begin{thm}\label{thm:Kuku}Let $f:L\to C$ be a quasisupermodular,  upper chain subcomplete function from a complete lattice to a chain. 
	Then $\argmax f$ is a nonempty  \emph{subcomplete} sublattice of $L$.
\end{thm}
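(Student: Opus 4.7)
The plan is to split the theorem into two sub-claims: (i) $\argmax f$ is a sublattice of $L$, and (ii) $\argmax f$ is nonempty. With both in hand, Lemma~\ref{lm:Sha3} (applicable because $f$ is upper chain subcomplete) upgrades $\argmax f$ to a subcomplete sublattice, which is exactly the conclusion.

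For (i), I would take $x,y\in\argmax f$ and set $M=f(x)=f(y)$. Since $M\ge f(x\wedge y)$, quasisupermodularity (Definition~\ref{df:qsi}) immediately gives $f(x\vee y)\ge f(y)=M$, so $x\vee y\in\argmax f$. If the inequality $M\ge f(x\wedge y)$ were strict, the strict half of quasisupermodularity would force $f(x\vee y)>M$, which is impossible; hence $f(x\wedge y)=M$ and $x\wedge y\in\argmax f$ as well.

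For (ii) I would apply Zorn's lemma to the partial order $\preceq$ on $L$ defined by $x\preceq y\iff x\le y\text{ and }f(x)\le f(y)$. Given a $\preceq$-chain $D\subset L$ and $s:=\sup_L D$, for each $x_0\in D$ the subset $\{y\in D:y\ge x_0\}$ is cofinal in $D$ and lies inside $[f\ge f(x_0)]$; upper chain subcompleteness then places $s$ in $[f\ge f(x_0)]$, so $f(s)\ge f(x_0)$. Thus $s$ is a $\preceq$-upper bound on $D$, and Zorn produces a $\preceq$-maximal element $s^\star\in L$.

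To finish, I would suppose for contradiction that some $t\in L$ satisfies $f(t)>f(s^\star)$, and split on whether $f(s^\star)\ge f(s^\star\wedge t)$ or $f(s^\star)<f(s^\star\wedge t)$. The easy branch is the former: quasisupermodularity yields $f(s^\star\vee t)\ge f(t)>f(s^\star)$, while $s^\star\vee t=s^\star$ would force $f(t)=f(s^\star\wedge t)\le f(s^\star)$, a contradiction; hence $s^\star\prec s^\star\vee t$, violating maximality. The main obstacle is the opposite branch $f(s^\star\wedge t)>f(s^\star)$: the element $u:=s^\star\wedge t$ lies below $s^\star$ yet has strictly larger $f$-value, so $u$ is $\preceq$-incomparable to $s^\star$ and does not directly contradict its maximality. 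I expect this to be the delicate step where quasisupermodularity and upper chain subcompleteness must be combined more cleverly, most plausibly by iterating the Zorn construction transfinitely inside the nested chain-subcomplete sets $[f\ge f(s^\star)]\supset[f\ge f(u)]\supset\cdots$ and invoking a cardinality bound on strictly increasing well-ordered chains in $C$ to force termination, or by refining $\preceq$ into a strictly finer partial order that rules out this branch at maximal elements. Once (ii) is established, combining it with (i) and Lemma~\ref{lm:Sha3} completes the proof.
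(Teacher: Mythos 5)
Your part (i) (that $\argmax f$ is a sublattice, granting it is nonempty) and your final reduction to Lemma~\ref{lm:Sha3} are correct and coincide with what the paper does. The genuine gap is exactly where you flag it: the nonemptiness argument. Zorn's lemma applied to the order $x\preceq y\iff(x\le y\text{ and }f(x)\le f(y))$ does produce a $\preceq$-maximal $s^\star$ (your verification that $\sup_L D$ bounds a $\preceq$-chain $D$ via upper chain subcompleteness is fine), but a $\preceq$-maximal element need not be a maximizer of $f$: in the branch $f(s^\star\wedge t)>f(s^\star)$ the better point lies \emph{below} $s^\star$, is $\preceq$-incomparable to it, and nothing in quasisupermodularity lets you convert it into a $\preceq$-larger element. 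Your proposed repairs (transfinite iteration through the nested level sets, or an unspecified refinement of $\preceq$) are not arguments; they are restatements of the difficulty. This is not a cosmetic omission: the attainment of a maximum for quasisupermodular, upper chain subcomplete functions is precisely the delicate point of the theorem --- the paper notes that Milgrom--Shannon's original proof of the corresponding statement (their Theorem~A4) rests on the false Statement~\ref{st:A3}, so no naive one-pass Zorn argument should be expected to close it.

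The paper's route is different and is the part your proposal is missing. It considers the level-set correspondence $F:f(L)\to 2^L$, $F(c)=[f\ge c]$, observes that each value is chain-subcomplete (by upper chain subcompleteness of $f$) and that $F$ is increasing for Veinott's relation $\ge^{\wV}$ (this is where quasisupermodularity enters: if $x\in F(c)$, $x'\in F(c')$, $c<c'$ and $x\wedge x'\notin F(c)$, then $f(x\wedge x')<f(x)$ forces $f(x\vee x')>f(x')\ge c'$). Kukushkin's selection theorem then yields an increasing selection $r:f(L)\to L$, and $m:=\sup_L r(f(L))$ lies in every $F(t)$ by chain-subcompleteness, hence $m\in\argmax f$. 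You would need to either import that selection theorem or supply an equivalent transfinite construction; as written, your proof does not establish nonemptiness.
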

\begin{proof}
	Define a correspondence $F:f(L)\to 2^L$ by $F(c)=\{x\in L:f(x)\ge c\}$. Since $f$ is upper chain subcomplete, for every $c\in f(L)$, the value $F(c)$ is chain-subcomplete in $L$.
	
	For any $c<c'$ in $C$, every $x\in F(c)$ and every $x'\in F(c')$ with $x\wedge x'\notin F(c)$, one has $f(x\wedge x')<c\le f(x)$. Since $f$ is quasisupermodular, one has $c'\le f(x')<f(x\vee x')$. Thus, $x\vee x'\in F(c')$. Therefore, the correspondence  $F$ is increasing with respect to the relation $\ge^{\wV}$ defined in \cite[(2d)]{kukushkin2013increasing}. By \cite[Theorem 2.2]{kukushkin2013increasing}, there is an increasing selection $r:f(L)\to L$ of $F$.
	
	For any $t\le t'$ in $f(L)$, we have $r(t')\in F(t')\subset F(t)$. Hence, $\{r(t')\}_{t'\ge t}$ is a chain in $F(t)$. Since $F(t)$ is chain-subcomplete  in $L$, the element $m:=\sup_Lr(f(L))=\sup_L\{r(t')\}_{t'\ge t}$ exists and is in $F(t)$. Then $f(m)\ge t$ for every $t\in f(L)$. Hence $m\in \argmax f$.	
	
	For any $u,v\in \argmax f$, since $f(u\vee v)\le \max f=f(u)$ (resp. $f(u\wedge v)\le \max f=f(u)$) and $f$ is quasisupermodular, one has $\max f=f(v)\le f(u\wedge v)$ (resp. $\max f=f(v)\le f(u\vee v)$). Therefore, $u\wedge v$ (resp. $u\vee v$) is in $\argmax f$. So, $\argmax f$ is a sublattice of $L$. Then by  Lemma \ref{lm:Sha3}, the nonempty sublattice $\argmax f$ is subcomplete.\end{proof}
\begin{rk}Zhou's theorem \cite[Thm.~1]{zhou1994set} is stated for  correspondences whose values are subcomplete sublattices. That is why we need Theorem \ref{thm:Kuku}  in the proof of Theorem \ref{thm:order}. \end{rk}
\section{Nash equilibria of quasisupermodular games}\label{sec:qsgame}
In  Section \ref{sec:qsgame}, we recall the definition of quasisupermodular games and give several results about the existence and structure of Nash equilibria. Quasisupermodular games retain the main feature of supermodular games, i.e., complementarities among actions.  The scope of quasisupermodular games surpasses that of supermodular games, broadening the spectrum of economic scenarios to which it can be applied.
\begin{df}\cite[p.160]{milgrom1994monotone}\label{df:single}
	Let $P,Q$ be two posets. Let $C$ be a chain. A function $f:P\times Q\to C$ satisfies the \emph{single crossing property} relative to $(P,Q)$ if for any $p\le p'$ in $P$ and any $q\le q'$ in $Q$, the condition $f(p,q)\le f(p',q)$ implies $f(p,q')\le f(p',q')$, and the condition $f(p,q)< f(p',q)$ implies $f(p,q')< f(p',q')$. 
\end{df}

A function having increasing difference relative to $P\times Q$ (in the sense of \cite[p. 42]{topkis1998supermodularity}) satisfies the single crossing property relative to $(P,Q)$. However, unlike the increasing difference property, single crossing property is not symmetric in the two variables, as shown by \cite[Example~2.6.9]{topkis1998supermodularity}.

Lemma \ref{lm:pairquasisupermodular} compares the hypotheses of Theorem \ref{thm:order} with those in \cite[Ch.~10, Theorem 2]{veinott1992lattice}.
\begin{lm}\label{lm:pairquasisupermodular}
	Let $L,L'$ be two lattices. Let $C$ be a chain.  Let $f:L\times L'\to C$ be a function. Suppose that for every chain $\cC\subset L'$, the restriction $f|_{L\times \cC}:L\times \cC\to C$ is  quasisupermodular. Then $f$ satisfies the single crossing property relative to $(L,L')$, and for every $t\in L'$, the function $f(\cdot,t):L\to C$ is quasisupermodular.
\end{lm}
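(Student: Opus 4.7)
The plan is to derive both conclusions by specializing the chain $\mathcal{C}\subset L'$ appearing in the hypothesis; no deeper machinery seems required.

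For the second assertion, I would fix $t\in L'$ and take $\mathcal{C}=\{t\}$, which is trivially a chain in $L'$. The product $L\times\{t\}$ is lattice-isomorphic to $L$, and under this identification the restriction $f|_{L\times\{t\}}$ is the function $f(\cdot,t)$. The hypothesis then gives its quasisupermodularity directly.

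For the single crossing property, I would fix $p\le p'$ in $L$ and $q\le q'$ in $L'$, and take $\mathcal{C}=\{q,q'\}$, a two-element chain in $L'$. In the product lattice $L\times\mathcal{C}$, the elements $x:=(p',q)$ and $y:=(p,q')$ satisfy $x\wedge y=(p,q)$ and $x\vee y=(p',q')$. Applying the quasisupermodularity of $f|_{L\times\mathcal{C}}$ to this pair, the implication $f(x)\ge f(x\wedge y)\Rightarrow f(x\vee y)\ge f(y)$ reads $f(p',q)\ge f(p,q)\Rightarrow f(p',q')\ge f(p,q')$, and the strict version reads $f(p',q)>f(p,q)\Rightarrow f(p',q')>f(p,q')$. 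These are exactly the two implications required by Definition \ref{df:single}.

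The main (minor) thing to check is that meets and joins in $L\times\mathcal{C}$ really are computed coordinatewise with the chain order on $\mathcal{C}$, so that $(p',q)\wedge(p,q')=(p,q)$ and $(p',q)\vee(p,q')=(p',q')$; this is immediate from $q\le q'$. No obstacle beyond this identification is anticipated.
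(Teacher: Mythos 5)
Your proposal is correct and follows essentially the same route as the paper's proof: restrict to the one-element chain $\{t\}$ for quasisupermodularity of $f(\cdot,t)$, and to the two-element chain $\{q,q'\}$ with the pair $(p',q)$, $(p,q')$ (whose meet and join are $(p,q)$ and $(p',q')$) for the single crossing property. The only cosmetic difference is that the paper treats the case $q\lneq q'$ explicitly, the case $q=q'$ being trivial.
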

\begin{proof}
	Consider   $x\le x'$ in $L$ and  $t\lneq t'$ in $L'$. 
	Then $\{t,t'\}$ is a chain in $L'$. Thus, $f|_{L\times \{t,t'\}}$ is quasisupermodular.
	Since  $(x,t')\wedge (x',t)=(x,t)$ and $(x,t')\vee (x',t)=(x',t')$, the condition  $f(x,t)\le f(x',t)$ (resp. $f(x,t)< f(x',t)$) implies $f(x,t')\le f(x',t')$ (resp. $f(x,t')< f(x',t')$). Thus, $f$ satisfies the single crossing property relative to $(L,L')$. Since $\{t\}$ is a chain in $L'$, the function $f|_{L\times \{t\}}:L\times \{t\}\to C$ is quasisupermodular, i.e., $f(\cdot,t):L\to C$ is quasisupermodular.\end{proof} 

\begin{eg}\label{eg:Veinott}
	Let $L=\{0,1,a,b\}$ be a lattice, where  $0=\min L$, $1=\max L$ and $a,b$ are incomparable. Then $a\wedge b=0$ and $a\vee b=1$. Let $C=\{0,1\}$ be a \emph{chain}.  Define a function  $f:L\times C\to \R$ by \begin{gather*}f(0,0)=f(b,1)=2,\, f(a,0)=f(b,0)=1,\\
		f(1,0)=f(1,1)=0,\, f(a,1)=9,\, f(0,1)=10.\end{gather*}
	Then $f$ satisfies the single crossing property relative to $(L,C)$. For every $t\in C$, the  function $f(\cdot,t):L\to \R$ is quasisupermodular.  However, $f$ is not quasisupermodular. Indeed, one has $f((a,0)\wedge (b,1))=f(0,0)=2=f(b,1)$ and $f((a,0)\vee (b,1))=f(1,1)=0<1=f(a,0)$.
\end{eg}
\begin{df}\label{df:qsgame}
	A quasisupermodular game \begin{equation}\label{eq:qsgame}(N,\{S_i\}_{i\in N},\{f_i\}_{i\in N},\{C_i\}_{i\in N})\end{equation} is the following data: \begin{enumerate}
		\item a nonempty  set of players $N$;
		\item for every $i\in N$, a nonempty lattice $S_i$ of the strategies of player $i$. Write $S=\prod_iS_i$ for the set of joint strategies and $S_{-i}:=\prod_{j\neq i}S_j$;
		\item for every $i\in N$, a nonempty chain $C_i$ representing the possible gains of player $i$;
		\item for each player $i\in N$, a payoff function $f_i: S\to C_i$  such that for every $x_{-i}\in S_{-i}$, the function $f_i(\cdot,x_{-i}):S_i\to C_i$ is quasisupermodular  (Definition \ref{df:qsi});
		\item for every $i\in N$, the function $f_i$ satisfies the single crossing property relative to $(S_i,S_{-i})$.
	\end{enumerate}
\end{df}\begin{rk}Such a game bears various names in the literature. It is called a game with the single crossing property in \cite[Sec.~3]{shannon1990ordinal} and
	is said to have (ordinal) strategic complementarities in \cite[p.175]{milgrom1994monotone}.  The definition on \cite[p.179]{topkis1998supermodularity} requires further   each $S_i$ to be a sublattice of some Euclidean space.\end{rk} 
From now to the end of the paper, fix a quasisupermodular game (\ref{eq:qsgame}).

\begin{df}[(Pure) Nash equilibrium]
	A  joint strategy $x\in  S$ is a Nash equilibrium of the game (\ref{eq:qsgame}) if for every $i\in N$ and every $y_i\in S_i$, one has $f_i(y_i,x_{-i})\le f_i(x)$. The subset of $S$ comprised  of all Nash equilibria is denoted by $E$.
\end{df} \begin{rk} Although we cling to the analysis of pure-strategy Nash equilibria,  mixed-strategy Nash equilibria are discussed in \cite{echenique2003mixed} for quasisupermodular games.\end{rk}
Classically, the best response correspondence  is an essential tool to study Nash equilibria.	For each player $i\in N$, the (individual) best response correspondence  $R_i:S\to 2^{S_i}$ is defined by \[R_i(x)=\mathrm{argmax}_{y_i\in S_i}f_i(y_i,x_{-i}).\]The joint best response $R:S\to 2^S$ is defined as $R(x)=\prod_{i\in N}R_i(x_{-i})$. Then $E$ coincides with the set of fixed points of $R$.

Theorem \ref{thm:order} is purely order theoretic. For supermodular games, a purely order-theoretic result about the existence of the largest and the least Nash equilibria is in \cite[p.1266]{milgrom1990rationalizability}.
\begin{thm}\label{thm:order}
	Assume that for every $i\in N$, the lattice $S_i$ is complete, and for every $x_{-i}\in S_{-i}$, the function $f_i(\cdot,x_{-i}):S_i\to C_i$ is upper chain subcomplete. Then $E$ (with the order structure inherited from $S$) is a nonempty complete lattice.
\end{thm}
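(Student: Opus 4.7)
The plan is to realize $E$ as the fixed-point set of the joint best-response correspondence $R \colon S \to 2^S$, $R(x) = \prod_{i \in N} R_i(x_{-i})$, on the complete lattice $S = \prod_{i \in N} S_i$, and then invoke Zhou's fixed-point theorem \cite[Theorem 1]{zhou1994set} to conclude that $E$ is a nonempty complete lattice. Zhou's theorem requires two ingredients: each value $R(x)$ is a nonempty subcomplete sublattice of $S$, and $R$ is monotone in the Veinott sense, meaning that $x \le x'$ together with $u \in R(x)$ and $v \in R(x')$ force $u \wedge v \in R(x)$ and $u \vee v \in R(x')$.

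For the first ingredient, I would fix $x \in S$ and $i \in N$ and apply Theorem \ref{thm:Kuku} to $f_i(\cdot, x_{-i}) \colon S_i \to C_i$: this function is quasisupermodular by Definition \ref{df:qsgame} and upper chain subcomplete by hypothesis, and $S_i$ is a complete lattice, so $R_i(x_{-i}) = \argmax_{y_i \in S_i} f_i(y_i, x_{-i})$ is a nonempty subcomplete sublattice of $S_i$. Taking products coordinate by coordinate then shows that $R(x)$ is a nonempty subcomplete sublattice of $S$.

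For the second ingredient, it suffices to argue coordinate by coordinate; fix $i$ and let $u, v$ be as above with $x \le x'$. Optimality of $u_i$ yields $f_i(u_i, x_{-i}) \ge f_i(u_i \wedge v_i, x_{-i})$; quasisupermodularity of $f_i(\cdot, x_{-i})$ turns this into $f_i(u_i \vee v_i, x_{-i}) \ge f_i(v_i, x_{-i})$; the single crossing property (with $v_i \le u_i \vee v_i$ in $S_i$ and $x_{-i} \le x'_{-i}$ in $S_{-i}$) propagates the inequality to $f_i(u_i \vee v_i, x'_{-i}) \ge f_i(v_i, x'_{-i})$, which together with $v_i \in R_i(x'_{-i})$ forces $u_i \vee v_i \in R_i(x'_{-i})$. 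The dual inclusion $u_i \wedge v_i \in R_i(x_{-i})$ follows by contradiction from the strict versions: were it to fail, strict quasisupermodularity and strict single crossing would chain to $f_i(u_i \vee v_i, x'_{-i}) > f_i(v_i, x'_{-i})$, contradicting optimality of $v_i$ at $x'_{-i}$. Both hypotheses being verified, Zhou's theorem delivers the conclusion. The main obstacle is this second ingredient, since the absence of additive structure forces us to thread the weak and strict versions of quasisupermodularity and of single crossing in parallel; the upstream Theorem \ref{thm:Kuku}, which delivers subcompleteness rather than mere completeness of each $R(x)$, is equally essential, because Zhou's theorem would not otherwise apply.
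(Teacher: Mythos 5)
Your proposal is correct and follows essentially the same route as the paper: establish via Theorem \ref{thm:Kuku} that each value of the joint best response $R$ is a nonempty subcomplete sublattice of the complete lattice $S$, check that $R$ is increasing in Veinott's sense, and conclude with Zhou's fixed-point theorem. The only difference is that where the paper simply cites \cite[Theorem 2.8.6]{topkis1998supermodularity} for the monotonicity of $R_i$, you inline the standard Milgrom--Shannon argument threading the weak and strict forms of quasisupermodularity and single crossing, which is a correct and more self-contained rendering of the same step.
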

\begin{proof}
	By Theorem \ref{thm:Kuku}, for every $x\in S$ and every $i\in N$,  the value $R_i(x)\subset S_i$ is a nonempty subcomplete sublattice. Therefore, so is $R(x)\subset S$. Moreover, by \cite[Theorem 2.8.6]{topkis1998supermodularity}, the correspondence $R_i$ is increasing in the sense of \cite[p.33]{topkis1998supermodularity}. So $R:S\to 2^S$ is also increasing. Because every $S_i$ is complete, the product lattice $S$ is complete. Then by Zhou's  theorem \cite[Thm.~1]{zhou1994set}, $E$ is a nonempty complete lattice.
\end{proof}
\begin{rk}By Proposition \ref{pp:topo},  the continuity hypothesis of Theorem \ref{thm:order} is weaker than the topological upper semicontinuity. 
	
	Veinott \cite[Ch.~10, Thm.~2]{veinott1992lattice} gives a similar result for parameterized games.  When restricted to normal form games, Veinott's theorem requires that for every $i\in N$ and every chain $\cC\subset S_{-i}$, the function $f_i|_{S_i\times \cC}:S_i\times \cC\to C$ is quasisupermodular. By Lemma \ref{lm:pairquasisupermodular}, this hypothesis implies that the game is a quasisupermodular game in the sense of Definition \ref{df:qsgame}. However, as Example \ref{eg:Veinott} shows, a quasisupermodular game may not satisfy Veinott's hypothesis. Furthermore, Veinott's theorem requires  every $C_i$ to be complete.   In this sense, Theorem \ref{thm:order} is not covered by Veinott's theorem.\end{rk}

\begin{eg}
	Let $N=\{1,2\}$. Let $C_i=\R$ for $i=1,2$. Let $S_1$ be the complete lattice $L$ constructed in Example \ref{eg:A3}. Let $S_2$ be the finite lattice $X$ in Example \ref{eg:Veinott}. Let $x_0,f$ be as in  Example \ref{eg:A3}. Define a function $f_1:S_1\times S_2\to \R,\quad (s_1,s_2)\mapsto f(s_1)$.  Then $f_1(\cdot,0):S_1\to \R$  is not upper semicontinuous in the interval topology of $S_1$.  Define a function $f_2:S_1\times S_2\to \R$ by $f_2(s_1,0)=0$, $f_2(s_1,a)=f_2(s_1,b)=1$ and $f_2(s_1,1)=1.5$ for every $s_1\in S_1$. Then $f_2(s_1,\cdot):S_2\to \R$ is  not supermodular. Still, the corresponding game satisfies the hypotheses of Theorem \ref{thm:order}. In this case, the set $E=(L\setminus\{x_0\})\times \{1\}$.
\end{eg}

For every $i\in N$, assume $C_i=\R$. Let $\tau_i$ be a \emph{compact} topology on $S_i$ finer than the interval topology of $S_i$.  By the Frink-Birkhoff theorem \cite[Theorem 20, p.250]{birkhoff1940lattice},  the lattice $S_i$ is complete. Consequently, their product $S$ is also a complete lattice.

Theorem \ref{thm:topo+order} below is similar to \cite[Theorem 1]{prokopovych2017strategic}. In  \cite[Theorem 1]{prokopovych2017strategic},  each player's strategy set $S_i$ is a  chain, the set of players $N$ is finite and only the existence of Nash equilibria is proved.  On the one hand, we allow the $S_i$ to be  lattices, the set $N$ to be infinite and the existence of the greatest Nash equilibrium is established.  On the other hand, \cite{prokopovych2017strategic} weakens the single crossing condition (Definition \ref{df:qsgame}) to upward or downward transfer single-crossing condition  (\cite[Definition 4]{prokopovych2017strategic}).
\begin{thm}\label{thm:topo+order}
	Assume that	 for every $i\in N$ and every $x_{-i}\in S_{-i}$, the function $f_i(\cdot,x_{-i}):S_i\to \R$ is transfer weakly upper continuous relative to $\tau_i$, and upward (resp. downward) upper semicontinuous.  Then there  is  a largest (resp. least) Nash equilibrium. 
\end{thm}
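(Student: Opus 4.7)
I will treat the largest-equilibrium case; the other is dual. The plan is to extract a monotone single-valued selection $\bar R:S\to S$ from the best-response correspondence and apply Tarski's fixed point theorem: the greatest fixed point of $\bar R$ will be the largest Nash equilibrium.

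The crux is to show that, for every $i\in N$ and every $x_{-i}\in S_{-i}$, the set $R_i(x_{-i})=\argmax g$, with $g:=f_i(\cdot,x_{-i})$, is nonempty and admits a largest element $\bar R_i(x_{-i})$. I will first rerun the covering argument of Fact \ref{ft:Tian} under the weaker hypothesis of transfer \emph{weak} upper continuity: if $\argmax g=\emptyset$, then for each $y\in S_i$ there is some $z$ with $g(y)<g(z)$, so the definition yields an open neighborhood $U_y\ni y$ and a witness $x'_y\in S_i$ with $g(u)\le g(x'_y)$ on $U_y$; compactness of $S_i$ in $\tau_i$ gives a finite subcover $\{U_{y_j}\}_{j=1}^n$, and the value $M':=\max_j g(x'_{y_j})$ is simultaneously an upper bound for $g$ and attained at some $x'_{y_{j_0}}$, contradicting $\argmax g=\emptyset$. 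Once nonempty, quasisupermodularity of $g$ makes $R_i(x_{-i})$ a sublattice of $S_i$ (the same calculation as the last paragraph of the proof of Theorem \ref{thm:Kuku}), and upward upper semicontinuity forces the level set $\{y:g(y)\ge\max g\}=R_i(x_{-i})$ to be closed under chain suprema. Zorn's lemma then produces a maximal element, and the sublattice property (as in Lemma \ref{lm:min}) upgrades that maximal element to a largest element $\bar R_i(x_{-i})$.

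The rest is routine assembly. Topkis's monotone-maximizer theorem, using single crossing and quasisupermodularity (both built into Definition \ref{df:qsgame}), makes $R_i(\cdot)$ increasing in the strong set order, so $\bar R_i$ is order-preserving in $x_{-i}$; assembling coordinatewise yields an order-preserving $\bar R:S\to S$, $\bar R(x)_i=\bar R_i(x_{-i})$. Each $(S_i,\tau_i)$ being compact with $\tau_i$ refining the interval topology forces $S_i$ to be a complete lattice by Frink--Birkhoff, so $S$ is complete and Tarski's fixed point theorem provides a largest fixed point $x^*$ of $\bar R$. Since $x^*_i=\bar R_i(x^*_{-i})\in R_i(x^*_{-i})$, the point $x^*$ is a Nash equilibrium; any equilibrium $y$ satisfies $y_i\in R_i(y_{-i})$, hence $y_i\le\bar R_i(y_{-i})=\bar R(y)_i$, i.e., $y\le\bar R(y)$, so Tarski's characterization of the greatest fixed point as $\sup\{y:y\le\bar R(y)\}$ gives $y\le x^*$.

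The main obstacle I anticipate is the first step, since Fact \ref{ft:Tian} is formulated for the strict notion of transfer upper continuity while we are given only the weak one. The point is that the Tian--Zhou covering argument never really exploits strict inequality: weak bounds already suffice because the finite maximum of the finitely many witness values is automatically attained, so the contradiction one derives is nonattainment of the supremum rather than a strict comparison.
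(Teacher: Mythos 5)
Your proposal is correct and follows essentially the same route as the paper's proof: establish nonemptiness of each $R_i(x_{-i})$ from compactness plus transfer weak upper continuity, use upward upper semicontinuity and Zorn's lemma together with the sublattice property (via Lemma \ref{lm:min}) to get a largest best response, check that $\max R$ is increasing, and apply Tarski's theorem, identifying the greatest fixed point as the largest equilibrium. The only cosmetic difference is that you re-derive the Tian--Zhou attainment result by the finite-subcover argument, whereas the paper simply cites it as \cite[Theorem 1]{tian1995transfer}.
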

\begin{proof}By symmetry, it suffices to prove the statement without parentheses. For every $i\in N$ and every $x_{-i}\in S_{-i}$, by  \cite[Theorem 1]{tian1995transfer}, the compactness of $\tau_i$ and the transfer weak upper continuity of $f_i(\cdot,x_{-i})$ imply that $R_i(x)$ is nonempty. Therefore, by \cite[Thm.~2.8.6]{topkis1998supermodularity}, the correspondence $R_i:S\to 2^{S_i}$ is increasing. So, the  correspondence  $R:S\to 2^S$ is increasing. Whence, $R(x)$ is a sublattice of $S$ for every $x\in S$.
	
	We claim that for every $x\in S$, the largest element of $R(x)$ exists. 
	
	For every $i\in N$ and every chain $C$ in $R_i(x)$, the upward upper semicontinuity implies \[f(\sup_{S_i}(C),x_{-i})\ge \limsup_{t\in C,t\uparrow\sup_{S_i}(C)}f(t,x_{-i})=\max_{S_i} f(\cdot,x_{-i}),\] so $\sup_{S_i}(C)\in R_i(x)$. Now that every chain in the nonempty poset $R_i(x)$ has an upper bound, by Zorn's lemma, $R_i(x)$ has a maximal element $m_i$. Then $(m_i)_i\in R(x)$ is a maximal element of $R(x)$. By Lemma \ref{lm:min}, it is the largest element of $R(x)$. The claim is proved.
	
	As $R$ is increasing, the single-valued function $\max R:S\to S$ is increasing. As $S$ is a complete lattice, $\max S$ exists. Then by Tarski's fixed point theorem \cite[Thm.~1]{tarski1955lattice}, the function $\max R:S\to S$ has a fixed point $x^*:=\max\{y\in S:y\le \max R(y)\}$. Then $x^*$ is a Nash equilibrium. 
	
	We show that $x^*$ is the largest Nash equilibrium. If $s\in S$ is another Nash equilibrium, then $s\in R(s)\le \max R(s)$. By definition of $x^*$, we have $s\le x^*$, which means that $x^*$ is largest Nash equilibrium. 
\end{proof}
\begin{eg}
	Let $N=\{1,2\}$. Let $S_1=[0,1]$ and $S_2=[0,1]^2$. Define a function $f_1:S_1\times S_2=[0,1]^3\to \R$ by \[(s_1,a,b)\mapsto \begin{cases}
		1 & s_1>0,\\
		0 & s_1=0.
	\end{cases}\] Then $f_1(\cdot,0,0):S_1\to \R$ is not  order upper semicontinuous. Define a function $f_2:S_1\times S_2=[0,1]^3\to \R,\quad (s_1,a,b)\mapsto s_1+a+b$. Then the resulting game satisfies the conditions of Theorem \ref{thm:topo+order}. The set $E$ is $(0,1]\times \{(1,1)\}$. The largest Nash equilibrium is $(1,1,1)$, but there is no least Nash equilibrium. As $S_2$ is not a chain, one cannot apply \cite[Theorem 1]{prokopovych2017strategic} in this case.
\end{eg}
With a topological hypothesis (on the payoff functions)  strictly weaker than  continuity, we  get the same order structure of Nash equilibria as in Theorem \ref{thm:order}.
\begin{thm}\label{thm:topo}
	Assume that for every $i\in N$ and every $x_{-i}\in S_{-i}$, the function $f_i(\cdot,x_{-i}):S_i\to \R$ is transfer upper continuous relative to $\tau_i$.  Then $E$  is a nonempty complete lattice.
\end{thm}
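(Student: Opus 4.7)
The plan is to apply Zhou's theorem \cite[Thm.~1]{zhou1994set} to the joint best response correspondence $R:S\to 2^S$, in parallel with the proof of Theorem \ref{thm:order}, but replacing the order-theoretic existence argument (Theorem \ref{thm:Kuku}) with the topological one supplied by Fact \ref{ft:Tian}. Each $S_i$ is complete by the Frink-Birkhoff theorem, hence so is $S=\prod_i S_i$. What remains to be verified is that for every $x\in S$ the value $R(x)$ is a nonempty subcomplete sublattice of $S$, and that $R$ is increasing.

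Fix $x\in S$ and $i\in N$. Applying Fact \ref{ft:Tian} to the compact space $(S_i,\tau_i)$ and the transfer upper continuous function $f_i(\cdot,x_{-i}):S_i\to \R$ yields that $R_i(x)$ is nonempty and closed, hence compact, in $\tau_i$. Since $\tau_i$ is finer than the interval topology of $S_i$, the identity on $S_i$ carries this down to compactness in the interval topology. The function $f_i(\cdot,x_{-i})$ is quasisupermodular by Definition \ref{df:qsgame}, and the verbatim argument from the last paragraph of the proof of Theorem \ref{thm:Kuku} shows that $R_i(x)$ is a sublattice of $S_i$. A sublattice of a complete lattice that is compact in the interval topology is automatically subcomplete, by the observation recorded in Remark \ref{rk:durieu}. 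Hence $R_i(x)$ is subcomplete in $S_i$, and taking products, $R(x)=\prod_i R_i(x)$ is a nonempty subcomplete sublattice of $S$.

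That $R$ is increasing follows from the single crossing property in Definition \ref{df:qsgame} together with \cite[Thm.~2.8.6]{topkis1998supermodularity}, exactly as in the proof of Theorem \ref{thm:order}. Zhou's theorem then concludes that $E$, the set of fixed points of $R$, is a nonempty complete lattice.

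The main obstacle is the bridge from topological compactness to order-theoretic subcompleteness. Transfer upper continuity is a purely topological condition, yet Zhou's theorem demands a subcomplete sublattice, which is intrinsically order-theoretic. This gap is closed by two ingredients: the hypothesis that $\tau_i$ is finer than the interval topology, which transports compactness into the canonical order topology on $S_i$; and the Durieu-type observation that a nonempty compact sublattice of a complete lattice is automatically subcomplete. Without the finer-topology hypothesis on $\tau_i$, the conclusion would not follow from transfer upper continuity alone.
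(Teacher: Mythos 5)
Your proposal is correct and follows essentially the same route as the paper: Fact \ref{ft:Tian} gives nonemptiness and compactness of $R_i(x)$ in $\tau_i$, the finer-topology hypothesis transfers this to the interval topology, the sublattice-plus-compactness combination yields subcompleteness, and Zhou's theorem concludes. The only cosmetic differences are that the paper obtains the sublattice property of $R_i(x)$ from \cite[Thm.~2.8.6]{topkis1998supermodularity} rather than by repeating the quasisupermodularity argument from Theorem \ref{thm:Kuku}, and it cites \cite[Topkis's theorem]{yu2022existence} instead of Remark \ref{rk:durieu} for the passage from interval-topology compactness to subcompleteness.
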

\begin{proof} For every $i\in N$ and every $x\in S$, by Fact \ref{ft:Tian} and the compactness of $\tau_i$,   the value $R_i(x)$ is nonempty and compact in $(S_i,\tau_i)$. As $\tau_i$ is finer than the interval topology of $S_i$, the subset $R_i(x)$ is  compact in the interval topology.
	By \cite[Theorem 2.8.6]{topkis1998supermodularity}, the correspondence $R_i:S\to 2^{S_i}$ is  increasing. Therefore, the correspondence $R:S\to 2^S$ is increasing and $R_i(x)$ is a sublattice of $S_i$. By \cite[Topkis's theorem]{yu2022existence}, the sublattice $R_i(x)$ is subcomplete in $S_i$. Therefore, $R(x)$ is a nonempty subcomplete sublattice of $S$.
	By Zhou's theorem \cite[Thm.~1]{zhou1994set}, $E$ is a nonempty complete lattice. 
\end{proof}

Corollary \ref{cor:Calciano} follows from Theorem \ref{thm:topo}.  It specializes to Calciano's result in \cite[Sec 4.1]{calciano2007games} about the structure of the set of  Nash equilibria when $N$ is finite. 
\begin{cor}[Calciano]\label{cor:Calciano}Suppose that for every $i\in N$ and every $x_{-i}\in S_{-i}$, the function $f_i(\cdot,x_{-i}):S_i\to\R$ is  upper semicontinuous relative to $\tau_i$. Then  $E$  is a nonempty  complete lattice.
\end{cor}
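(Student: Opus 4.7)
The plan is to derive Corollary \ref{cor:Calciano} as an immediate specialization of Theorem \ref{thm:topo}. The only content to verify is that the hypothesis of the corollary is strictly stronger than the hypothesis of Theorem \ref{thm:topo}; once this is established, the conclusion (that $E$ is a nonempty complete lattice) transfers without change.

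First I would recall that directly after Definition \ref{df:transferctu}, the paper observes that any upper semicontinuous function is transfer upper continuous, and that this implication is strict. Applying this pointwise, for every player $i\in N$ and every $x_{-i}\in S_{-i}$, the assumption that $f_i(\cdot,x_{-i}):S_i\to\R$ is upper semicontinuous relative to $\tau_i$ immediately yields that $f_i(\cdot,x_{-i})$ is transfer upper continuous relative to $\tau_i$. Thus the payoff functions satisfy the hypothesis of Theorem \ref{thm:topo}.

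Second, I would note that the other structural hypotheses needed to invoke Theorem \ref{thm:topo}, namely that $(\ref{eq:qsgame})$ is a quasisupermodular game in the sense of Definition \ref{df:qsgame} and that each $\tau_i$ is a compact topology on $S_i$ finer than the interval topology, are already in force as standing assumptions from the start of Section \ref{sec:qsgame}. Consequently, Theorem \ref{thm:topo} applies and gives that $E$ is a nonempty complete lattice, which is exactly the desired conclusion.

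There is no real obstacle here: the entire content of the corollary is packaged into the implication ``upper semicontinuous $\Rightarrow$ transfer upper continuous'' together with Theorem \ref{thm:topo}. The only care I would take in writing the proof is to state explicitly the quantification (for every $i$ and every $x_{-i}$) so that the reader sees that the pointwise implication is what is being used.
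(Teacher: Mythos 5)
Your proposal is correct and is exactly the paper's own (implicit) argument: the paper derives Corollary \ref{cor:Calciano} from Theorem \ref{thm:topo} via the observation, recorded right after Definition \ref{df:transferctu}, that upper semicontinuity implies transfer upper continuity. Nothing is missing.
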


\begin{rk}
	Under an extra continuity hypothesis  (namely, for every $i\in N$ and every $x_i\in S_i$, the function $f_i(x_i,\cdot):S_{-i}\to \R$ is continuous), the existence of both least and largest Nash equilibria is  showed in \cite[Theroem 12]{milgrom1994monotone}, extending a previous result \cite[Proposition 4]{shannon1990ordinal}.
\end{rk}
\begin{eg}
	Let $N=\{1,2\}$. Let $S_1=S_2=[0,1]$. Define  a function $f_1:S_1\times S_2\to \R$ by \[(s_1,s_2)\mapsto\begin{cases}
		0 & s_1\le 1/3,\\
		1 & 1/3<s_1<2/3,\\
		2 & 2/3\le s_1.
	\end{cases}\] Then $f_1(\cdot,0):S_1\to \R$ is not upper semicontinuous. Define $f_2:S_1\times S_2\to \R,\quad (s_1,s_2)\mapsto (s_1+1)(s_2-s_2^2)$. The resulting game satisfies the conditions of Theorem \ref{thm:topo} but not those of Corollary \ref{cor:Calciano}. The set $E=[2/3,1]\times \{1/2\}$. Thus, Theorem \ref{thm:topo} is strictly stronger than Corollary \ref{cor:Calciano}.
\end{eg}
\begin{cor}
	If for every $i\in N$, the set $S_i$ is finite, then  $E$ is a nonempty complete  lattice.
\end{cor}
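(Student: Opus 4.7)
The plan is to reduce this corollary to Theorem \ref{thm:order} by observing that all the hypotheses are automatically fulfilled when each strategy set is finite. First, I would note that a finite lattice is automatically complete: every nonempty subset is finite, so its supremum and infimum can be obtained by iterating the binary operations $\vee$ and $\wedge$. Thus each $S_i$ is a complete lattice, satisfying the first hypothesis of Theorem \ref{thm:order}.

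Next, I would check the upper chain subcompleteness hypothesis. Since $S_i$ is finite, every chain $C \subset S_i$ is finite, hence contains its own sup and inf. Therefore, for any $x_{-i} \in S_{-i}$, any $a \in C_i$, and any nonempty chain $C \subset [f_i(\cdot,x_{-i}) \ge a]$, we have $\sup_{S_i}(C), \inf_{S_i}(C) \in C \subset [f_i(\cdot,x_{-i}) \ge a]$. This is precisely the condition of Definition \ref{df:Shannon}, so $f_i(\cdot,x_{-i})$ is upper chain subcomplete. (This is in the same spirit as Remark \ref{rk:everychainisfinite}.)

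With both hypotheses verified, Theorem \ref{thm:order} directly yields that $E$, equipped with the order inherited from $S = \prod_i S_i$, is a nonempty complete lattice. There is no real obstacle: the entire content of the corollary is the observation that finiteness trivializes the topological and order-continuity hypotheses imposed in the main theorems. As an aside, one could alternatively route the argument through Theorem \ref{thm:topo} by equipping each $S_i$ with the discrete topology $\tau_i$, which is compact (by finiteness), finer than the interval topology, and renders every real-valued function transfer upper continuous; but the direct appeal to Theorem \ref{thm:order} avoids any detour through topology and relies on nothing beyond the observation that chains in a finite poset are finite.
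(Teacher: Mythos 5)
Your proof is correct, but it takes a different route from the paper's. The paper equips each $S_i$ with the discrete topology $\tau_i$, observes that this topology is compact and that every function $f_i(\cdot,x_{-i}):S_i\to\R$ is upper semicontinuous relative to it, and then invokes Corollary \ref{cor:Calciano} (hence ultimately Theorem \ref{thm:topo}, Fact \ref{ft:Tian} and Zhou's theorem). You instead reduce to the purely order-theoretic Theorem \ref{thm:order}: a finite lattice is complete, and since every chain in a finite poset is finite it contains its own supremum and infimum, so the upper chain subcompleteness of Definition \ref{df:Shannon} holds vacuously. Both reductions are sound, and your verification of the two hypotheses of Theorem \ref{thm:order} is airtight. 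What your route buys is that it bypasses topology entirely and, in principle, applies to arbitrary chain-valued payoffs $C_i$ rather than only $C_i=\R$ (the running assumption in force where the paper places this corollary and which its proof via Corollary \ref{cor:Calciano} genuinely needs). What the paper's route buys is brevity at that point in the text: the discrete-topology observation is a one-line specialization of machinery already set up for Theorems \ref{thm:topo+order} and \ref{thm:topo}. Your closing aside about the discrete topology is essentially the paper's argument, except that the paper goes through upper semicontinuity and Corollary \ref{cor:Calciano} rather than transfer upper continuity and Theorem \ref{thm:topo} directly.
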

\begin{proof}For every $i\in N$, take $\tau_i$ to be the discrete topology on $S_i$. Then $\tau_i$ is compact, and for every  $x_{-i}\in S_{-i}$,  the function $f_i(\cdot,x_{-i}):S_i\to \R$ is  upper semicontinuous relative to $\tau_i$. The result follows from Corollary \ref{cor:Calciano}.
\end{proof}
\subsection*{Acknowledgments}
	I express gratitude to  Philippe Bich, my  supervisor, for his invaluable feedback and  much support. I am sincerely grateful to the two anonymous referees, for their detailed comments. I thank them for pointing out \cite{davey2002introduction,roman2008lattices,durieu2008ordinal}, and for suggesting the generalization of Propositions \ref{pp:usc} and \ref{pp:topo} from real-valued functions to chain-valued functions.
	
	\bibliography{ref.bib}
	\bibliographystyle{alpha}
\end{document}